\algrenewcommand\algorithmicrequire{\textbf{Input:}}
\algrenewcommand\algorithmicensure{\textbf{Output:}}
\newtheorem{definition}{Definition}
\newtheorem{theorem}{Theorem}[section]
\newtheorem{lemma}[theorem]{Lemma}
\theoremstyle{remark}
\definecolor{darkgreen}{rgb}{0.0, 0.55, 0.0}
\title{New Insights into Population Dynamics from the Continuous McKendrick Model}
\author{Dragos-Patru Covei\thanks{Department of Applied Mathematics, The Bucharest University of Economic Studies, Piata Romana, No. 6, Bucharest, District 1, 010374, Romania. E-mail: dragos.covei@csie.ase.ro}}
\date{}
\begin{document}
%%%%%%%%%%%%%%%%%%%%%%%%%%%%%%%%%%%%%
\maketitle
	
\begin{abstract}
\noindent
This article presents a comprehensive study of the continuous McKendrick
model, which serves as a foundational framework in population dynamics and epidemiology. The
model is formulated through partial differential equations that describe the
temporal evolution of the age distribution of a population using continuously defined
birth and death rates. In this work, we provide rigorous derivations of the renewal equation,
establish the appropriate boundary conditions, and perform a detailed analysis of the
survival functions. The central result demonstrates that the population approaches extinction if and only if the net
reproduction number $R_{n}$ is strictly less than unity. We present two independent proofs: one based on Laplace transform techniques and Tauberian theorems, and another employing a reformulation as a system of ordinary differential equations with eigenvalue analysis. Additionally, we establish the connection
between the deterministic framework and stochastic process formulations, showing that the McKendrick equation emerges as the fluid limit of an individual-based stochastic model.
\medskip
\noindent
\\
\textbf{AMS Subject Classification}: 35Q92; 35F16; 92D25; 45D05.
		
\medskip\noindent
\textbf{Keywords}: Population dynamics; McKendrick model; age-structured populations; renewal equation; extinction criterion; stochastic processes; diffusion approximation.
\end{abstract}

%%%%%%%%%%%%%%%%%%%%%%%%%%%%%%%%%%%%%

\section{Introduction}

The continuous McKendrick model \cite{mckendrick1925} constitutes a
fundamental framework for studying age-structured populations in
epidemiology and mathematical biology \cite{Brauer2011,Cushing1981,Feller1941,Gurtin1979,Li1988}. In its classical
formulation, the model is expressed as the following initial-boundary value problem:
\begin{equation}
\left\{ 
\begin{array}{lll}
\displaystyle\frac{\partial \rho }{\partial a}(a,t)+\frac{\partial \rho }{\partial t}(a,t)=-\mu \rho (a,t), & \text{for} & 
a,t>0, \\ [2ex]
\displaystyle\rho (0,t)=\int_{0}^{\infty }\beta (a) \rho (a,t)\, da, & \text{for} & t>0, \\ [2ex]
\rho (a,0)=\psi (a), & \text{for} & a\geq 0.
\end{array}%
\right.  \label{mk}
\end{equation}%
Here, the function $\rho(a,t)$ represents the population density of individuals of age $a$ at time $t$. The parameters $\beta(a)$ and $\mu$ denote the age-dependent birth rate and the constant mortality rate, respectively. The function $\psi(a)$ specifies the initial age distribution at time $t=0$. The first equation in \eqref{mk} is the transport equation describing aging with mortality; the boundary condition at $a=0$ states that the influx of newborns equals the total birth rate across all ages; and the initial condition prescribes the starting population structure.

This formulation leads naturally to a renewal
integral equation whose analysis uncovers key insights into the long-term
dynamics of the system. In the present work, we revisit the deterministic
McKendrick model \eqref{mk} and provide complete derivations of its main
results. Our approach employs the Laplace transform to solve the renewal
equation and to derive necessary and sufficient conditions for extinction.
In addition, we complement this analysis with an equivalent formulation in
terms of a system of ordinary differential equations, extending the techniques
employed in earlier studies \cite{Covei2024}. These methods not only
reinforce the classical results but also provide a versatile analytical framework suitable for further
extensions.

Beyond the deterministic setting, we incorporate stochastic effects into the
model by considering an age-structured stochastic process. Under a suitable
scaling, we demonstrate that the stochastic formulation converges to the
deterministic McKendrick model as the population size tends to infinity.
Our stochastic analysis builds on the foundational work by Ethier and Kurtz 
\cite{EthierKurtz} and is further refined in recent studies \cite{Allen2020,Ponosov2020}.

The principal contributions of this work are threefold:

\begin{itemize}
\item[(i)] A rigorous proof establishing that the population tends to extinction if and only if 
$R_{n}<1$, with complete derivations based on the renewal equation and eigenvalue analysis.

\item[(ii)] An explicit demonstration of how the Laplace transform and systems of
ordinary differential equations can be employed to analyze the asymptotic
behavior of the model, providing two independent verification methods.

\item[(iii)] A detailed exposition of the connection between the deterministic and
stochastic formulations, highlighting the impact of demographic noise on population
persistence and the emergence of the McKendrick equation as a fluid limit.
\end{itemize}

In recent years, the integration of individual-based epidemic models with
age-structured partial differential equations has provided new insights into
complex disease dynamics. In particular, the work of Foutel-Rodier et al. 
\cite{FoutelRodier2022PDE} demonstrates how recording the infection age of
individuals yields a low-dimensional yet fully informative representation of
epidemic spread. Building on these ideas, our approach further refines the
connection between microscopic stochastic processes and macroscopic epidemic
behavior. As shown in \cite{FoutelRodier2022PDE}, the McKendrick--von
Foerster framework not only reduces the complexity of high-dimensional ODE
systems but also facilitates reliable statistical inference on key
epidemiological parameters. Inspired by this paradigm, our study aims to
extend these insights to more heterogeneous populations and varied disease
settings.

Our results also extend the findings of Ripoll and Font \cite{RipollandFont2023} by providing a rigorous extinction criterion along with
a thorough stochastic analysis for age-structured populations. These new
insights can be applied within Lotka--Volterra predator-prey frameworks to
refine stability assessments and long-term population predictions.

The remainder of the paper is organized as follows. Section~\ref{sec:det} focuses
entirely on the deterministic McKendrick model, presenting the theoretical framework,
specific cases, and numerical applications. Section~\ref{sec:concept} discusses the conceptual significance of our results, demonstrating that numerical simulations and analytic theory can be unified through the exponential--polynomial structure of $\beta(a)$. Section~\ref{sec:stoch} is
devoted to the stochastic extension of the model and the analysis of
diffusion approximations, including numerical simulations. Section~\ref{sec:disc} discusses the implications of the results. Section~\ref{sec:conc}
concludes the paper. Detailed numerical algorithms with pseudocode and Python implementations are provided in the Appendix.

\section{The Deterministic Continuous McKendrick Model}\label{sec:det}

\subsection{Problem Statement and Main Results}

We consider the continuous McKendrick model \eqref{mk} governing the evolution of an
age-structured population, where $\rho(a,t)$ denotes the density
of individuals of age $a$ at time $t$, and $\mu > 0$ is the constant death rate. The
age-dependent birth rate is assumed to be of the form
\begin{equation}
\beta(a)=e^{-\mu_{1}a}\sum_{i=0}^{n}c_{i}\,a^{i},\quad \text{with }
c_{0}\neq 0,\; c_{i}\geq 0\;(i\geq 1),\text{ and }\mu_{1}>0.  \label{bl}
\end{equation}
This exponential-polynomial structure is biologically motivated: the exponential decay $e^{-\mu_1 a}$ models the decline in fertility with age, while the polynomial captures age-dependent variations in reproductive capacity. The initial condition $\psi(a)$ is taken to be the Dirac delta function centered at $a=0$:
\begin{equation}
\psi(a) = \delta(a), \quad \text{so that} \quad \int_{0}^{\infty}\psi(a)\,da = 1,
\label{eq:initial_dirac}
\end{equation}
representing a population initially concentrated at age zero (all individuals are newborns).

For our analysis, we introduce the \textit{survival function}
\begin{equation}
\pi(a) = e^{-\mu a},
\label{eq:survival}
\end{equation}
which represents the probability that an individual survives from birth to age $a$ under the constant mortality rate $\mu$. The \textit{net reproduction number} is then defined as
\begin{equation}
R_{n} = \int_{0}^{\infty}\beta(a)\,\pi(a)\,da = \sum_{i=0}^{n}c_{i}\frac{i!}{\bar{\mu}^{i+1}}, \quad \text{where } \bar{\mu} = \mu + \mu_{1}.
\label{eq:Rn_def}
\end{equation}
This quantity represents the expected number of offspring produced by an individual over its lifetime and serves as the fundamental threshold parameter for population dynamics.

By setting 
\begin{equation}
B(t) = \rho(0,t),
\label{eq:B_def}
\end{equation}
which represents the birth rate (influx of newborns) at time $t$, one may derive the \textit{renewal equation}
\begin{equation}
B(t) = \varphi(t) + \int_{0}^{t}\beta(a)\,\pi(a)\,B(t-a)\,da,  \label{vol}
\end{equation}
where the function $\varphi(t)$, depending on the initial condition $\psi(a)$, represents the contribution from individuals already present at time $t=0$.

\begin{lemma}[Solution representation via characteristics]\label{lem:characteristics}
The solution $\rho(a,t)$ of the McKendrick system \eqref{mk} can be represented as
\begin{equation}
\rho(a,t) = \begin{cases}
B(t-a)\,\pi(a), & \text{for } t \geq a, \\[1ex]
\displaystyle\psi(a-t)\,\frac{\pi(a)}{\pi(a-t)}, & \text{for } t < a,
\end{cases}  \label{so}
\end{equation}
and the function $\varphi(t)$ in the renewal equation \eqref{vol} is given by
\begin{equation}
\varphi(t) = \int_{t}^{\infty}\beta(a)\,\psi(a-t)\,\pi(a)\,da.
\label{eq:varphi_def}
\end{equation}
\end{lemma}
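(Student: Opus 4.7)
The plan is to apply the method of characteristics to the hyperbolic transport equation in \eqref{mk}, and then substitute the resulting solution formula back into the nonlocal boundary condition in order to extract the renewal equation \eqref{vol} together with the explicit form of $\varphi(t)$.

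First, I would introduce the characteristic curves $a = a_0 + s$, $t = t_0 + s$ parametrized by $s \geq 0$. Along such a curve the directional derivative $\partial_a + \partial_t$ collapses to the total derivative $d/ds$, so the PDE reduces to the ordinary differential equation $\rho'(s) = -\mu\,\rho(s)$, whose solution is $\rho(a_0 + s,\, t_0 + s) = \rho(a_0, t_0)\, e^{-\mu s}$. To evaluate $\rho(a,t)$ at an arbitrary point, I would then trace the characteristic through $(a,t)$ backward to its point of entry into the first quadrant, distinguishing two cases. If $t < a$, the characteristic meets the $a$-axis at $(a-t, 0)$, where the initial condition gives $\rho(a-t, 0) = \psi(a-t)$; exponential decay over arc length $t$ then yields $\rho(a,t) = \psi(a-t)\, e^{-\mu t}$, and rewriting $e^{-\mu t} = e^{-\mu a}/e^{-\mu (a-t)} = \pi(a)/\pi(a-t)$ recovers the second branch of \eqref{so}. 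If $t \geq a$, the characteristic meets the $t$-axis at $(0,\, t-a)$, where $\rho(0, t-a) = B(t-a)$ by definition, and decay over arc length $a$ gives the first branch $\rho(a,t) = B(t-a)\,\pi(a)$.

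To identify $\varphi$, I would substitute this piecewise formula into the boundary condition $B(t) = \int_0^\infty \beta(a)\,\rho(a,t)\,da$ and split the integral at the separating characteristic $a = t$. The contribution from the range $0 \leq a \leq t$ yields $\int_0^t \beta(a)\,\pi(a)\,B(t-a)\,da$, which matches the convolution term in \eqref{vol}, while the contribution from $a > t$ is by definition $\varphi(t)$. I expect the main subtlety to be the bookkeeping at the junction $a = t$ and the distributional meaning of $\psi(a-t)$ when $\psi = \delta$: the Dirac mass is concentrated precisely at $a = t$, where $\pi(a-t) = \pi(0) = 1$, so the factor $\pi(a)/\pi(a-t)$ collapses to $\pi(a)$ under the integral, reproducing \eqref{eq:varphi_def} as stated. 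Apart from this measure-theoretic care, the argument is a routine characteristics computation, with no genuinely difficult estimate involved.
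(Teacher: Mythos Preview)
Your proposal is correct and follows essentially the same approach as the paper: the method of characteristics with the two cases $t\geq a$ and $t<a$, followed by substitution into the boundary condition and splitting the integral at $a=t$. Your explicit remark that $\pi(a-t)=\pi(0)=1$ under the Dirac mass is precisely the justification the paper uses to reconcile the general expression $\int_t^\infty\beta(a)\,\psi(a-t)\,\pi(a)/\pi(a-t)\,da$ with the stated form \eqref{eq:varphi_def}.
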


\begin{proof}
The first equation in \eqref{mk} is a first-order linear transport equation. Using the method of characteristics, we parameterize along characteristic curves defined by $\frac{da}{ds} = 1$ and $\frac{dt}{ds} = 1$, giving $a - t = \text{const}$ along characteristics.

\textbf{Case 1:} For $t \geq a$, the characteristic through $(a,t)$ intersects the boundary $a = 0$ at time $t - a > 0$. Along this characteristic, we have
\[
\frac{d}{ds}\rho(a(s), t(s)) = -\mu\,\rho(a(s), t(s)),
\]
which integrates to $\rho(a,t) = \rho(0, t-a)\,e^{-\mu a} = B(t-a)\,\pi(a)$.

\textbf{Case 2:} For $t < a$, the characteristic intersects the initial line $t = 0$ at age $a - t > 0$. Integrating along the characteristic from $(a-t, 0)$ to $(a, t)$ yields
\[
\rho(a,t) = \psi(a-t)\,e^{-\mu t} = \psi(a-t)\,\frac{\pi(a)}{\pi(a-t)}.
\]

To derive \eqref{eq:varphi_def}, we substitute \eqref{so} into the boundary condition of \eqref{mk}:
\begin{align*}
B(t) &= \int_{0}^{\infty}\beta(a)\,\rho(a,t)\,da \\
&= \int_{0}^{t}\beta(a)\,B(t-a)\,\pi(a)\,da + \int_{t}^{\infty}\beta(a)\,\psi(a-t)\,\frac{\pi(a)}{\pi(a-t)}\,da.
\end{align*}
For our choice $\psi(a) = \delta(a)$ and $\pi(a) = e^{-\mu a}$, the second integral simplifies to
\[
\varphi(t) = \int_{t}^{\infty}\beta(a)\,\delta(a-t)\,e^{-\mu t}\,da = \beta(t)\,e^{-\mu t} = \beta(t)\,\pi(t).
\]
This completes the derivation of the renewal equation \eqref{vol}.
\end{proof}

The main theorem of this work is as follows.

\begin{theorem}[Extinction criterion]\label{main}
The population modeled by the system \eqref{mk} goes to
extinction, i.e.,
\begin{equation}
P(t) = \int_{0}^{\infty}\rho(a,t)\,da \rightarrow 0 \quad \text{as } t \rightarrow \infty,
\label{eq:extinction}
\end{equation}
if and only if the net reproduction number satisfies $R_{n} < 1$.
\end{theorem}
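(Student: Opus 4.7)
My plan is to exploit the renewal equation established in Lemma~\ref{lem:characteristics}, which reduces the extinction question to the asymptotic behavior of the scalar birth function $B(t)$, and then to extract sharp asymptotics via the Laplace transform. The first step will be to observe that $P(t)\to 0$ if and only if $B(t)\to 0$. Substituting the representation \eqref{so} into the definition of $P(t)$ in \eqref{eq:extinction} and using $\psi=\delta$, I obtain
\[
P(t) \;=\; \pi(t) \;+\; \int_{0}^{t}\pi(a)\,B(t-a)\,da.
\]
Since $\pi(t)=e^{-\mu t}\to 0$ and the kernel $\pi$ is nonnegative, bounded, and integrable, a short convolution/dominated-convergence argument gives the equivalence; the opposite direction $B(t)\not\to 0 \Rightarrow P(t)\not\to 0$ will in fact follow from the explicit formula for $B$ derived below.

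Next I would apply the Laplace transform to the renewal equation \eqref{vol}. Writing $\hat B(s)$, $\hat\varphi(s)$, and $\hat K(s):=\int_{0}^{\infty}\beta(a)\pi(a)e^{-sa}\,da$, the convolution theorem yields
\[
\hat B(s) \;=\; \frac{\hat\varphi(s)}{1-\hat K(s)}.
\]
A direct calculation based on \eqref{bl} produces the rational expression $\hat K(s)=\sum_{i=0}^{n} c_i\,i!/(s+\bar\mu)^{i+1}$, for which $\hat K(0)=R_n$ by \eqref{eq:Rn_def}. On the real axis $\hat K$ is strictly decreasing from $+\infty$ (as $s\downarrow -\bar\mu$) to $0$ (as $s\to\infty$), so the characteristic equation $\hat K(s)=1$ has a unique real root $s^{\ast}$, and the signs of $s^{\ast}$ and $R_n-1$ coincide. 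For any complex root $s_j$ of $1-\hat K$, the bound $|\hat K(s_j)|\leq \hat K(\mathrm{Re}\,s_j)$ (valid because $\beta\,\pi\geq 0$) combined with the monotonicity of $\hat K$ on the reals forces $\mathrm{Re}\,s_j\leq s^{\ast}$, so $s^{\ast}$ is the rightmost root.

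Finally, I would use the fact that $\hat\varphi$ and $\hat K$ are both rational (thanks to the exponential--polynomial form of $\beta$), so $\hat B$ is rational; partial-fraction inversion then produces an explicit finite representation $B(t)=\sum_j p_j(t)\,e^{s_j t}$ with polynomials $p_j$ indexed by the zeros of $1-\hat K$. Hence $B(t)\to 0$ if and only if every $s_j$ has strictly negative real part, which by the preceding step is equivalent to $s^{\ast}<0$ and therefore to $R_n<1$; in the supercritical case $R_n>1$ the dominant term is $e^{s^{\ast}t}$ with $s^{\ast}>0$, giving $B(t)\to\infty$. The delicate point I expect to be the main obstacle is the boundary case $R_n=1$, where $s^{\ast}=0$ is itself a root of the characteristic equation and one must rule out that its residue in $\hat B$ vanishes (otherwise $B$ could still decay); the rational structure of $\hat B$ makes this tractable by a direct residue computation at $s=0$, and it is precisely this rationality that short-circuits the Tauberian machinery one would otherwise need for a general kernel.
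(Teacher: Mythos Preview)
Your proposal is correct and follows essentially the same Laplace-transform route as the paper's Proof~I: transform the renewal equation, locate the unique real root $s^{\ast}$ of $\hat K(s)=1$, tie its sign to $R_n-1$, and then pass from $B(t)\to 0$ to $P(t)\to 0$. The only noteworthy differences are that you exploit the rationality of $\hat B$ to invert by partial fractions and argue the dominance of $s^{\ast}$ over the complex roots directly (the paper instead cites Tauberian theorems and leaves the dominance implicit), and that the paper additionally supplies an independent second proof via an ODE reformulation and eigenvalue analysis of the matrix~\eqref{ma}, which your proposal does not attempt.
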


\begin{proof}[Proof of Theorem~\ref{main}]
We present two independent proofs: one based on the Laplace transform of the renewal
equation and an alternative approach via reformulation as a system of ordinary differential equations.

\textbf{Proof I: Laplace Transform Method.}

Taking the Laplace transform (denoted by $\widehat{\cdot}$) of the renewal equation \eqref{vol} and using the convolution theorem yields
\begin{equation}
\widehat{B}(\lambda) = \widehat{\varphi}(\lambda) + \widehat{K}(\lambda)\,\widehat{B}(\lambda),  \label{hat}
\end{equation}
where the kernel transform is
\begin{equation}
\widehat{K}(\lambda) = \int_{0}^{\infty}\beta(a)\,\pi(a)\,e^{-\lambda a}\,da = \sum_{i=0}^{n}c_{i}\frac{i!}{(\lambda + \bar{\mu})^{i+1}}.
\label{eq:kernel_transform}
\end{equation}
For the initial condition $\psi(a) = \delta(a)$, we have $\varphi(t) = \beta(t)\,\pi(t)$, and thus
\begin{equation}
\widehat{\varphi}(\lambda) = \int_{0}^{\infty}\beta(t)\,\pi(t)\,e^{-\lambda t}\,dt = \widehat{K}(\lambda).
\label{eq:varphi_transform}
\end{equation}

Rearranging \eqref{hat} yields
\begin{equation}
\widehat{B}(\lambda) = \frac{\widehat{\varphi}(\lambda)}{1 - \widehat{K}(\lambda)} = \frac{\widehat{K}(\lambda)}{1 - \widehat{K}(\lambda)}.
\label{lap}
\end{equation}

\textit{Step 1: Properties of $\widehat{K}(\lambda)$.} We observe that:
\begin{itemize}
\item[(a)] $\widehat{K}(0) = R_{n} > 0$ by definition \eqref{eq:Rn_def}.
\item[(b)] $\widehat{K}(\lambda)$ is strictly decreasing in $\lambda$ for $\lambda > -\bar{\mu}$, since
\[
\frac{d\widehat{K}}{d\lambda} = -\int_{0}^{\infty}a\,\beta(a)\,\pi(a)\,e^{-\lambda a}\,da < 0.
\]
\item[(c)] $\lim_{\lambda \to \infty}\widehat{K}(\lambda) = 0$.
\item[(d)] $\lim_{\lambda \to -\bar{\mu}^+}\widehat{K}(\lambda) = +\infty$.
\end{itemize}

\textit{Step 2: Existence and location of the dominant root.} The equation $\widehat{K}(\lambda) = 1$, equivalently
\begin{equation}
(\lambda + \bar{\mu})^{n+1} - \sum_{i=0}^{n}c_{i}\cdot i!\cdot(\lambda + \bar{\mu})^{n-i} = 0,
\label{eq:characteristic}
\end{equation}
has a unique real root $\lambda_0$ by the intermediate value theorem and monotonicity of $\widehat{K}$.

\textit{Step 3: Location of $\lambda_0$ relative to zero.}
\begin{itemize}
\item If $R_{n} < 1$, then $\widehat{K}(0) = R_{n} < 1$, so $1 - \widehat{K}(0) > 0$. By continuity and monotonicity, $\lambda_0 < 0$.
\item If $R_{n} = 1$, then $\widehat{K}(0) = 1$, so $\lambda_0 = 0$.
\item If $R_{n} > 1$, then $\widehat{K}(0) > 1$, so $\lambda_0 > 0$.
\end{itemize}

\textit{Step 4: Asymptotic behavior via Tauberian theorems.} 
When $R_{n} < 1$, the function $\widehat{B}(\lambda)$ is analytic in a half-plane $\{\text{Re}(\lambda) > \lambda_0\}$ containing $\lambda = 0$. By classical Tauberian theorems for Laplace transforms (see Feller \cite{Feller1941}), the asymptotic behavior of $B(t)$ as $t \to \infty$ is governed by the singularity of $\widehat{B}(\lambda)$ closest to the imaginary axis, which is the simple pole at $\lambda = \lambda_0$. Specifically,
\begin{equation}
B(t) \sim C\,e^{\lambda_0 t} \quad \text{as } t \to \infty,
\label{eq:B_asymptotic}
\end{equation}
where $C > 0$ is a constant determined by the residue at $\lambda_0$. Since $\lambda_0 < 0$ when $R_{n} < 1$, we conclude $B(t) \to 0$ exponentially.

\textit{Step 5: Extinction of total population.}
Using the representation \eqref{so}, the total population is
\begin{align}
P(t) &= \int_{0}^{\infty}\rho(a,t)\,da = \int_{0}^{t}B(t-a)\,\pi(a)\,da + \int_{t}^{\infty}\psi(a-t)\,\frac{\pi(a)}{\pi(a-t)}\,da \nonumber\\
&= \int_{0}^{t}B(s)\,\pi(t-s)\,ds + e^{-\mu t},
\label{eq:P_integral}
\end{align}
where the last equality uses the substitution $s = t - a$ and $\psi = \delta$. Since $B(t) \to 0$ and $\pi(a) = e^{-\mu a} \to 0$ as their arguments tend to infinity, both terms vanish, yielding $P(t) \to 0$.

Conversely, if $R_{n} \geq 1$, then $\lambda_0 \geq 0$, and $B(t)$ does not decay to zero, ensuring population persistence.

\textbf{Proof II: ODE System and Eigenvalue Analysis.}

When the birth law has the form \eqref{bl}, we can recast the renewal
equation \eqref{vol} into a system of $n+1$ ordinary differential equations.
Define, for $i = 0, 1, \ldots, n$, the auxiliary functions
\begin{equation}
B_{i}(t) = \int_{0}^{t}a^{i}\,e^{-\bar{\mu} a}\,B(t-a)\,da = \int_{0}^{t}(t-s)^{i}\,e^{-\bar{\mu}(t-s)}\,B(s)\,ds.
\label{deriv}
\end{equation}

\textit{Step 1: Derivation of the ODE system.}
Differentiating $B_0(t)$ using Leibniz's rule:
\begin{align*}
B_0'(t) &= \frac{d}{dt}\int_{0}^{t}e^{-\bar{\mu}(t-s)}B(s)\,ds \\
&= e^{-\bar{\mu}\cdot 0}B(t) + \int_{0}^{t}\frac{\partial}{\partial t}\left[e^{-\bar{\mu}(t-s)}\right]B(s)\,ds \\
&= B(t) - \bar{\mu}\int_{0}^{t}e^{-\bar{\mu}(t-s)}B(s)\,ds = B(t) - \bar{\mu}B_0(t).
\end{align*}
From the renewal equation \eqref{vol} with $\varphi(t) = \beta(t)\pi(t)$:
\[
B(t) = \sum_{i=0}^{n}c_i\,t^i\,e^{-\bar{\mu} t} + \sum_{i=0}^{n}c_i\,B_i(t).
\]
For $i \geq 1$, differentiating $B_i(t)$ similarly yields:
\begin{align*}
B_i'(t) &= 0^i\cdot e^0 \cdot B(t) + \int_{0}^{t}\frac{\partial}{\partial t}\left[(t-s)^i e^{-\bar{\mu}(t-s)}\right]B(s)\,ds \\
&= \int_{0}^{t}\left[i(t-s)^{i-1} - \bar{\mu}(t-s)^i\right]e^{-\bar{\mu}(t-s)}B(s)\,ds \\
&= i\,B_{i-1}(t) - \bar{\mu}\,B_i(t).
\end{align*}

Thus, the vector $\mathbf{B}(t) = (B_{0}(t), B_{1}(t), \ldots, B_{n}(t))^{T}$ satisfies
\begin{equation}
\mathbf{B}'(t) = A\,\mathbf{B}(t) + \mathbf{d}(t),  \label{ode}
\end{equation}
where the matrix $A$ has the structure
\begin{equation}
A = \begin{pmatrix}
c_{0} - \bar{\mu} & c_{1} & c_2 & \cdots & c_{n-1} & c_{n} \\
1 & -\bar{\mu} & 0 & \cdots & 0 & 0 \\
0 & 2 & -\bar{\mu} & \cdots & 0 & 0 \\
\vdots & \vdots & \ddots & \ddots & \vdots & \vdots \\
0 & 0 & 0 & \cdots & n & -\bar{\mu}
\end{pmatrix},  \label{ma}
\end{equation}
and the forcing term is $\mathbf{d}(t) = (\varphi(t), 0, \ldots, 0)^{T}$.

\textit{Step 2: Solution of the ODE system.}
With initial condition $\mathbf{B}(0) = \mathbf{0}$, the unique solution is given by the variation of constants formula:
\begin{equation}
\mathbf{B}(t) = \int_{0}^{t}e^{A(t-s)}\,\mathbf{d}(s)\,ds.  \label{bc}
\end{equation}

\textit{Step 3: Eigenvalue analysis.}
The asymptotic behavior of $\mathbf{B}(t)$ is determined by the eigenvalues of $A$. The characteristic polynomial of $A$ is related to the denominator in \eqref{lap}. Specifically, the eigenvalues of $A$ are the roots of
\[
\det(\lambda I - A) = 0,
\]
which, after computation using cofactor expansion along the first row, equals
\[
(\lambda + \bar{\mu})^{n+1} - \sum_{i=0}^{n}c_i\cdot i!\cdot(\lambda + \bar{\mu})^{n-i} = 0.
\]
This is precisely the characteristic equation \eqref{eq:characteristic} from Proof I.

By Lemma~\ref{in} below, when $R_n < 1$, all eigenvalues of $A$ have negative real parts. Consequently, $e^{At} \to 0$ exponentially as $t \to \infty$, and from \eqref{bc}, $\mathbf{B}(t) \to \mathbf{0}$. Since $B_0(t) \to 0$ implies $B(t) \to 0$ through the renewal equation, and by the same argument as in Step 5 of Proof I, we conclude $P(t) \to 0$.

Conversely, if $R_n \geq 1$, at least one eigenvalue has non-negative real part, preventing decay to zero.
\end{proof}

\subsection{Auxiliary Lemmas}

We now establish the auxiliary results used in the proof of Theorem~\ref{main}.

\begin{lemma}[Monotonicity of $R_n$]\label{lem:Rn_monotone}
Define the sequence $\{R_{n}\}_{n\geq 0}$ by
\begin{equation}
R_{n} = \sum_{i=0}^{n}c_{i}\frac{i!}{\bar{\mu}^{i+1}}.
\label{eq:Rn_sequence}
\end{equation}
Then $\{R_{n}\}$ is strictly positive and monotonically increasing.
\end{lemma}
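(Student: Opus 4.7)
The plan is to read both assertions directly from the formula \eqref{eq:Rn_sequence} by exhibiting $R_n$ as a sum of non-negative terms with a strictly positive leading term. First I would record that, by the hypotheses in \eqref{bl} together with $\mu > 0$, one has $\bar{\mu} = \mu + \mu_1 > 0$, so $\bar{\mu}^{i+1} > 0$ for every $i \geq 0$. Under the biologically natural convention $c_0 > 0$ (which is needed anyway for $\beta(a)$ to be a non-negative birth rate) and the assumption $c_i \geq 0$ for $i \geq 1$, every summand $c_i\,i!/\bar{\mu}^{i+1}$ in \eqref{eq:Rn_sequence} is non-negative, and the $i=0$ term equals $c_0/\bar{\mu} > 0$. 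Dropping the remaining non-negative contributions yields
\[
R_n \;\geq\; \frac{c_0}{\bar{\mu}} \;>\; 0 \qquad (n \geq 0),
\]
which is strict positivity.

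For monotonicity I would simply telescope: the successive differences
\[
R_{n+1} - R_n \;=\; c_{n+1}\,\frac{(n+1)!}{\bar{\mu}^{\,n+2}}
\]
are non-negative because $c_{n+1} \geq 0$ and $\bar{\mu} > 0$. Hence $R_{n+1} \geq R_n$ for every $n \geq 0$, which is the monotonicity claim. The increase is strict precisely at indices for which $c_{n+1} > 0$; accordingly, the word \emph{strictly} in the statement is best read as modifying "positive" rather than "increasing", since the hypothesis $c_i \geq 0$ alone does not rule out $c_i = 0$ for some $i \geq 1$.

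There is no real obstacle here: the lemma is essentially a bookkeeping observation about a finite sum of non-negative terms. The only point that needs explicit mention is the sign of $c_0$, since the formulation \eqref{bl} literally demands $c_0 \neq 0$, whereas strict positivity of $R_n$ (and of $\beta$ near $a=0$) requires $c_0 > 0$; this is the standing convention throughout the paper and should be invoked once at the start of the proof.
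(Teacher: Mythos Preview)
Your argument is correct and mirrors the paper's own proof: the paper phrases the positivity as a formal induction on $n$ and the monotonicity via the same telescoping difference $R_{k+1}-R_k=c_{k+1}(k+1)!/\bar{\mu}^{k+2}\geq 0$, which is exactly what you write. Your remark about needing $c_0>0$ (rather than merely $c_0\neq 0$) and about strict versus non-strict increase is also the caveat the paper makes in its inductive step.
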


\begin{proof}
We proceed by induction on $n$.

\textit{Base case ($n=0$):} We have
\[
R_{0} = \frac{c_{0} \cdot 0!}{\bar{\mu}^{1}} = \frac{c_{0}}{\bar{\mu}} > 0,
\]
since $c_{0} \neq 0$ by assumption \eqref{bl} and $\bar{\mu} > 0$.

\textit{Inductive step:} Assume $R_{k} > 0$ for some $k \geq 0$. Then
\[
R_{k+1} = R_{k} + \frac{c_{k+1} \cdot (k+1)!}{\bar{\mu}^{k+2}}.
\]
Since $c_{k+1} \geq 0$ and $\bar{\mu} > 0$, we have $R_{k+1} \geq R_{k} > 0$. Moreover, if $c_{k+1} > 0$ for at least one $k$, the inequality is strict for that $k$, establishing monotonicity.

Note that for the birth law \eqref{bl}, we require the polynomial $\sum_{i=0}^{n}c_i a^i$ to be non-constant (otherwise $\beta$ would be a pure exponential), ensuring strict monotonicity for sufficiently large $n$.
\end{proof}

\begin{lemma}[Eigenvalue criterion]\label{in}
Assume $c_{0} - \bar{\mu} < 0$. Then all eigenvalues of the matrix $A$ in \eqref{ma} have negative real parts if and only if
\begin{equation}
\sum_{i=1}^{n}c_{i}\cdot i!\cdot\bar{\mu}^{n-i} < (\bar{\mu} - c_{0})\,\bar{\mu}^{n}.  \label{cond}
\end{equation}
\end{lemma}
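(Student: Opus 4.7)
The plan is to reduce the lemma to the analytic properties of the kernel Laplace transform $\widehat{K}(\lambda)$ already established in Proof~I of Theorem~\ref{main}, where the eigenvalue equation $\det(\lambda I - A)=0$ was shown to be equivalent to the scalar equation
$$(\lambda+\bar{\mu})^{n+1} - \sum_{i=0}^{n} c_i\cdot i! \cdot (\lambda+\bar{\mu})^{n-i}=0, \qquad \text{i.e.,}\quad \widehat{K}(\lambda)=1.$$
So the spectrum of $A$ is exactly $\{\lambda\in\C : \widehat{K}(\lambda)=1\}$ (where $\widehat{K}$ is extended to the half-plane $\operatorname{Re}\lambda>-\bar{\mu}$). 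With this identification in hand, the lemma becomes a statement about where the solutions of $\widehat{K}(\lambda)=1$ can lie.

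First I would verify that condition \eqref{cond} is nothing but $R_n<1$ in disguise: dividing both sides of \eqref{cond} by $\bar{\mu}^{n+1}$ and moving the $c_0/\bar{\mu}$ term to the left gives
$$\sum_{i=0}^{n}\frac{c_i\cdot i!}{\bar{\mu}^{i+1}}<1,$$
which is exactly $\widehat{K}(0)=R_n<1$ by \eqref{eq:Rn_def}. The standing assumption $c_0-\bar{\mu}<0$ is the natural prerequisite $c_0/\bar{\mu}<1$, since without it the left side of \eqref{cond} could not be less than the right, being a sum of non-negative terms.

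For the forward direction I would argue as follows. Assume \eqref{cond}, equivalently $\widehat{K}(0)<1$. For any complex $\lambda=x+iy$ with $x\geq 0$, the integral representation from \eqref{eq:kernel_transform} gives, by the triangle inequality applied to the defining integral,
$$\lvert\widehat{K}(x+iy)\rvert \leq \int_{0}^{\infty}\beta(a)\,\pi(a)\,e^{-xa}\,da = \widehat{K}(x).$$
Combined with the strict monotonicity of $\widehat{K}$ on the real axis (property (b) in Proof~I), this yields $\lvert\widehat{K}(\lambda)\rvert\leq \widehat{K}(x)\leq \widehat{K}(0)=R_n<1$. Hence $\widehat{K}(\lambda)\neq 1$ whenever $\operatorname{Re}\lambda\geq 0$, so every eigenvalue of $A$ satisfies $\operatorname{Re}\lambda<0$. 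For the converse, if \eqref{cond} fails then $R_n\geq 1$, and by the intermediate value argument in Step~3 of Proof~I the unique real root $\lambda_0$ of $\widehat{K}(\lambda)=1$ lies in $[0,\infty)$, giving an eigenvalue of $A$ with non-negative real part.

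The main subtlety I anticipate is not in either direction of the spectral argument, which is short, but rather in making sure the algebraic manipulation converting the determinantal characteristic polynomial into the form $\widehat{K}(\lambda)=1$ is really available for reuse; this was asserted in Proof~II of Theorem~\ref{main} via cofactor expansion along the first row of $A$, and I would simply cite that computation rather than redo it. Everything else reduces to the elementary bound $\lvert\widehat{K}(x+iy)\rvert\leq \widehat{K}(x)$ and the monotonicity already recorded for $\widehat{K}$ on $(-\bar{\mu},\infty)$.
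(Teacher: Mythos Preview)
Your argument is correct and takes a genuinely different route from the paper's. The paper proves the lemma by purely algebraic means: it computes the principal leading minors $\Delta_s$ of $A$, establishes by induction the closed form $(-1)^{s-1}\Delta_s = (\bar{\mu}-c_0)\bar{\mu}^s - \sum_{i=1}^{s} c_i\,i!\,\bar{\mu}^{s-i}$, identifies $1-R_n = (-1)^{n-1}\Delta_n/\bar{\mu}^{n+1}$, and then invokes M-matrix theory (the signed-minor criterion of Plemmons) together with the monotonicity $R_0 \le R_1 \le \cdots \le R_n$ from Lemma~\ref{lem:Rn_monotone} to transfer the single inequality $R_n<1$ to all the required minor sign conditions. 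You bypass both the minor computation and the M-matrix machinery entirely, using only the elementary integral bound $|\widehat{K}(x+iy)|\le \widehat{K}(x)$ and the real monotonicity of $\widehat{K}$ already recorded in Proof~I; this is shorter and more transparent. The paper's approach, in exchange, exposes the determinant identity~\eqref{det} and the role of the partial reproduction numbers $R_s$, and would survive in settings where no Laplace/integral representation of the characteristic equation is available. Your reuse of the cofactor-expansion identity $\det(\lambda I-A)=(\lambda+\bar{\mu})^{n+1}-\sum_i c_i\,i!\,(\lambda+\bar{\mu})^{n-i}$ from Proof~II is not circular, since that computation is carried out before Lemma~\ref{in} is invoked.
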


\begin{proof}
We analyze the characteristic polynomial of $A$ through its connection to the net reproduction number.

\textit{Step 1: Determinant computation.}
Define the principal minors
\[
\Delta_{s} = \det\begin{pmatrix}
c_{0} - \bar{\mu} & c_{1} & \cdots & c_{s} \\
1 & -\bar{\mu} & \cdots & 0 \\
\vdots & \ddots & \ddots & \vdots \\
0 & \cdots & s & -\bar{\mu}
\end{pmatrix}, \quad s = 1, 2, \ldots, n,
\]
with $\Delta_{0} = c_{0} - \bar{\mu}$. Using cofactor expansion along the last column repeatedly, one can show by induction that
\begin{equation}
\Delta_{s}(\bar{\mu}) = (-1)^{s-1}\left[(\bar{\mu} - c_{0})\,\bar{\mu}^{s} - \sum_{i=1}^{s}c_{i}\cdot i!\cdot\bar{\mu}^{s-i}\right].  \label{det}
\end{equation}

\textit{Step 2: Connection to $R_n$.}
We establish the key identity linking the determinant to the reproduction number:
\begin{align}
1 - R_{n} &= 1 - \sum_{i=0}^{n}c_{i}\frac{i!}{\bar{\mu}^{i+1}} \nonumber\\
&= \frac{\bar{\mu}^{n+1} - \sum_{i=0}^{n}c_{i}\cdot i!\cdot\bar{\mu}^{n-i}}{\bar{\mu}^{n+1}} \nonumber\\
&= \frac{(\bar{\mu} - c_{0})\,\bar{\mu}^{n} - \sum_{i=1}^{n}c_{i}\cdot i!\cdot\bar{\mu}^{n-i}}{\bar{\mu}^{n+1}} \label{gamma1}\\
&= \frac{(-1)^{n-1}\Delta_{n}(\bar{\mu})}{\bar{\mu}^{n+1}}. \nonumber
\end{align}
This establishes that $\text{sign}(1 - R_n) = \text{sign}((-1)^{n-1}\Delta_n)$.

\textit{Step 3: Stability analysis via M-matrix theory.}
The matrix $A$ can be written as $A = D + N$ where $D = \text{diag}(c_0 - \bar{\mu}, -\bar{\mu}, \ldots, -\bar{\mu})$ and $N$ is the remaining off-diagonal part. The shifted matrix $\bar{\mu}I - A$ has all diagonal entries positive (equal to $\bar{\mu}$) and all off-diagonal entries non-positive.

By the theory of M-matrices \cite{Plemmons1977}, all eigenvalues of $A$ have negative real parts if and only if $-A$ is a non-singular M-matrix with positive inverse. This occurs if and only if
\[
\det(-A) = (-1)^{n+1}\det(A) > 0 \quad \text{and} \quad (-1)^{s+1}\Delta_s > 0 \text{ for all } s = 0, 1, \ldots, n.
\]

\textit{Step 4: Equivalence with $R_n < 1$.}
From the sequence of inequalities \eqref{eq:Rn_sequence} and Lemma~\ref{lem:Rn_monotone}:
\[
R_n \geq R_{n-1} \geq \cdots \geq R_1 \geq R_0 > 0.
\]
Thus $R_n < 1$ implies $R_s < 1$ for all $s \leq n$, which by \eqref{gamma1} gives $(-1)^{s-1}\Delta_s > 0$, i.e., $(-1)^{s+1}\Delta_s > 0$ for all $s$. This establishes that all eigenvalues have negative real parts.

Conversely, if all eigenvalues have negative real parts, then in particular $(-1)^{n+1}\Delta_n > 0$, which by \eqref{gamma1} implies $1 - R_n > 0$, i.e., $R_n < 1$.

The condition \eqref{cond} is simply a rearrangement of $R_n < 1$ using \eqref{gamma1}.
\end{proof}

\subsection{Extension to Sums of Exponentials}

The analysis extends naturally to birth rates expressed as sums of exponentials. Consider
\begin{equation}
\pi(a) = e^{-\mu a} \quad \text{and} \quad \beta(a) = \sum_{i=0}^{n}c_{i}\,e^{-\mu_{i}a}, \quad \mu_{i} > 0,
\label{bl2}
\end{equation}
with $\bar{\mu}_{i} = \mu_{i} + \mu$. In this framework, define
\begin{equation}
B_{i}(t) = \int_{0}^{t}e^{-\bar{\mu}_{i}(t-s)}\,B(s)\,ds, \quad i = 0, 1, \ldots, n.
\label{eq:Bi_exp}
\end{equation}
The computation reduces to solving the ODE system \eqref{ode} with matrix elements
\[
\alpha_{ij} = c_{j} \quad (i \neq j), \qquad \alpha_{ii} = c_{i} - \bar{\mu}_{i},
\]
yielding the system matrix
\begin{equation}
A = \begin{pmatrix}
c_0 - \bar{\mu}_0 & c_1 & \cdots & c_n \\
c_0 & c_1 - \bar{\mu}_1 & \cdots & c_n \\
\vdots & \vdots & \ddots & \vdots \\
c_0 & c_1 & \cdots & c_n - \bar{\mu}_n
\end{pmatrix}.
\label{eq:A_exp}
\end{equation}

The net reproduction number becomes
\begin{equation}
R_{n} = \sum_{i=0}^{n}\int_{0}^{\infty}c_{i}\,e^{-\bar{\mu}_{i}t}\,dt = \sum_{i=0}^{n}\frac{c_{i}}{\bar{\mu}_{i}},
\label{eq:Rn_exp}
\end{equation}
and the determinant of $A$ admits the representation
\begin{equation}
\det(A) = (-1)^{n}\left(1 - R_{n}\right)\prod_{l=0}^{n}\bar{\mu}_{l}.
\label{gamma2}
\end{equation}

\begin{theorem}[Extinction criterion for exponential sums]\label{r3}
For the birth law \eqref{bl2}, the population modeled by \eqref{mk} goes to extinction if and only if $R_{n} < 1$.
\end{theorem}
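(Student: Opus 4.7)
The plan is to replay the two proofs of Theorem~\ref{main} with the exponential-sum kernel $\widehat{K}(\lambda) = \sum_{i=0}^{n} c_i/(\lambda + \bar{\mu}_i)$ replacing the exponential-polynomial one. For the Dirac initial condition, $\varphi(t) = \beta(t)\pi(t) = \sum_i c_i e^{-\bar{\mu}_i t}$ so $\widehat{\varphi}(\lambda) = \widehat{K}(\lambda)$, and the representation $\widehat{B}(\lambda) = \widehat{K}(\lambda)/(1 - \widehat{K}(\lambda))$ carries over verbatim.

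For the Laplace transform route I would first reestablish the four qualitative properties of Step~1 of Proof~I. On $\lambda > -\bar{\mu}_{\min}$ with $\bar{\mu}_{\min} = \min_i \bar{\mu}_i$, the kernel is strictly positive, strictly decreasing (its derivative equals $-\sum_i c_i/(\lambda + \bar{\mu}_i)^2 < 0$), vanishes at $+\infty$, and, since the residue $c_{\min}$ attached to $\bar{\mu}_{\min}$ is positive, diverges to $+\infty$ as $\lambda \to -\bar{\mu}_{\min}^+$. This produces a unique real root $\lambda_0$ of $\widehat{K}(\lambda) = 1$; since $\widehat{K}(0) = R_n$, the sign of $\lambda_0$ agrees with the sign of $R_n - 1$ exactly as in Step~3. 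The Tauberian argument of Step~4 and the splitting \eqref{eq:P_integral} then give $P(t) \to 0$ if and only if $\lambda_0 < 0$, which occurs if and only if $R_n < 1$.

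For the ODE route, Leibniz' rule applied to \eqref{eq:Bi_exp} yields $B_i'(t) = B(t) - \bar{\mu}_i B_i(t)$; substituting $B(t) = \varphi(t) + \sum_j c_j B_j(t)$ into each row produces the system \eqref{ode} with the matrix \eqref{eq:A_exp} and forcing $\mathbf{d}(t) = \varphi(t)\mathbf{1}$. Since $A = \mathbf{1}\mathbf{c}^T - \mathrm{diag}(\bar{\mu}_0,\ldots,\bar{\mu}_n)$ is a diagonal plus rank-one update, the matrix determinant lemma gives in one line
\[
\det(\lambda I - A) = \prod_{i=0}^{n}(\lambda + \bar{\mu}_i)\bigl(1 - \widehat{K}(\lambda)\bigr),
\]
so the eigenvalues of $A$ are precisely the $n+1$ roots of $1 - \widehat{K}(\lambda) = 0$. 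Evaluating at $\lambda = 0$ recovers \eqref{gamma2} up to the standard sign $\det(-A) = (-1)^{n+1}\det(A)$.

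The delicate step, as in Lemma~\ref{in}, is controlling the complex roots, since monotonicity on the real line only pins down $\lambda_0$. The resolution is that $A$ is a Metzler matrix: every off-diagonal entry equals some $c_j \geq 0$. For such essentially non-negative matrices the spectral abscissa is attained at a real eigenvalue (the Perron--Frobenius eigenvalue), which here must coincide with the unique real root $\lambda_0$ of $\widehat{K}(\lambda) = 1$ in $(-\bar{\mu}_{\min}, \infty)$. Consequently $R_n < 1$ forces every eigenvalue of $A$ to have strictly negative real part, $e^{At} \to 0$ exponentially, $\mathbf{B}(t) \to \mathbf{0}$, and the total population $P(t) \to 0$ via \eqref{eq:P_integral}. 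The converse is immediate: if $R_n \geq 1$ then $\lambda_0 \geq 0$, $B(t)$ cannot decay, and persistence follows.
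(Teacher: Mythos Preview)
Your proposal is correct and follows exactly the program the paper indicates (``the proof follows the same structure as Theorem~\ref{main} and is omitted for brevity''): you replay both the Laplace-transform and ODE arguments with the exponential-sum kernel $\widehat{K}(\lambda)=\sum_i c_i/(\lambda+\bar{\mu}_i)$ in place of the polynomial one.

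Where you do diverge slightly from the paper's template is in the linear-algebraic machinery. For the polynomial case the paper (Lemma~\ref{in}) computes the principal minors $\Delta_s$ by cofactor expansion and then invokes M-matrix theory, using the monotonicity $R_0\le R_1\le\cdots\le R_n$ to control every minor simultaneously. You instead exploit the rank-one structure $A=\mathbf{1}\mathbf{c}^T-\operatorname{diag}(\bar{\mu}_i)$: the matrix determinant lemma gives $\det(\lambda I-A)=\prod_i(\lambda+\bar{\mu}_i)\bigl(1-\widehat{K}(\lambda)\bigr)$ in one line, and the Metzler/Perron--Frobenius observation pins the spectral abscissa to the unique real root $\lambda_0$ without ever touching the lower-order minors. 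This is genuinely cleaner than transcribing the $\Delta_s$ induction, and it makes the identification of the ODE eigenvalues with the renewal characteristic roots completely transparent. The paper's route, by contrast, would require an analogue of Lemma~\ref{lem:Rn_monotone} for the partial sums $\sum_{i\le s} c_i/\bar{\mu}_i$, which is equally easy but more bookkeeping. Both approaches tacitly need $c_i\ge 0$ (for your Metzler argument and for the paper's M-matrix sign pattern), which the paper does not state explicitly at \eqref{bl2} but is implicit in $\beta(a)\ge 0$; you might flag that assumption.
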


The proof follows the same structure as Theorem~\ref{main} and is omitted for brevity.

\subsection{Examples of Birth Laws}\label{ex}

We illustrate our approach with concrete examples demonstrating the flexibility of the exponential-polynomial framework.

\subsubsection{Example 1: Gauss--Laguerre Approximation of Completely Monotonic Functions}

A classical example of a completely monotonic function is
\[
\beta(a) = \frac{1}{1+a}, \quad a \geq 0.
\]
Its Laplace transform representation,
\[
\frac{1}{1+a} = \int_{0}^{\infty}e^{-at}\,e^{-t}\,dt,
\]
suggests that $\beta(a)$ can be approximated by a positive-coefficient sum of exponentials. Applying two-point Gauss--Laguerre quadrature \cite{burden2010numerical} yields the nodes and weights
\[
t_{1} \approx 0.58579, \quad t_{2} \approx 3.41421, \quad w_{1} \approx 0.85355, \quad w_{2} \approx 0.14645,
\]
giving the approximation
\begin{equation}
\frac{1}{1+a} \approx 0.85355\,e^{-0.58579\,a} + 0.14645\,e^{-3.41421\,a}.
\label{eq:example1_approx}
\end{equation}

\begin{figure}[H]
\centering
\subfloat{\includegraphics[width=0.6\textwidth]{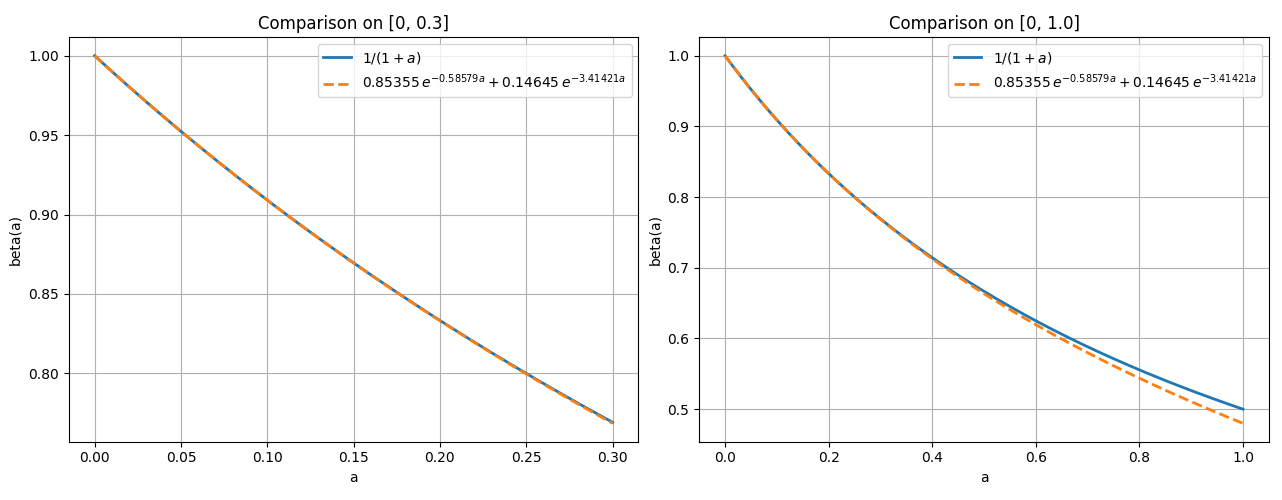}}
\caption{\textbf{Approximation quality for Example 1.} Comparison of the exact function $\beta(a) = 1/(1+a)$ (solid curve) with its two-term Gauss--Laguerre exponential approximation \eqref{eq:example1_approx} (dashed curve). \textit{Left panel}: On the interval $[0, 0.3]$, the approximation is visually indistinguishable from the exact function, with relative error below $0.1\%$. \textit{Right panel}: On the extended interval $[0, 1.0]$, the approximation remains accurate with maximum relative error approximately $3\%$ near $a = 1$. The exponential approximation slightly overestimates the true function for intermediate ages, a consequence of matching the first two moments of the Laguerre weight function.}
\label{fig:twoFigures2}
\end{figure}

\subsubsection{Example 2: Geometric Series Birth Law}

Consider the birth law
\[
\beta(a) = e^{-\mu_{1}a}\,f(a), \quad a \geq 0,
\]
where $f(a) = (1 - a/2)^{-1}$ for $0 \leq a < 2$ has the Taylor expansion
\[
f(a) = \sum_{i=0}^{\infty}\frac{a^i}{2^{i}}.
\]
A truncated form with $n = 3$ gives
\begin{equation}
\beta(a) \approx e^{-\mu_{1}a}\left(1 + \frac{a}{2} + \frac{a^{2}}{4} + \frac{a^{3}}{8}\right).
\label{eq:example2_approx}
\end{equation}

\begin{figure}[H]
\centering
\subfloat{\includegraphics[width=0.6\textwidth]{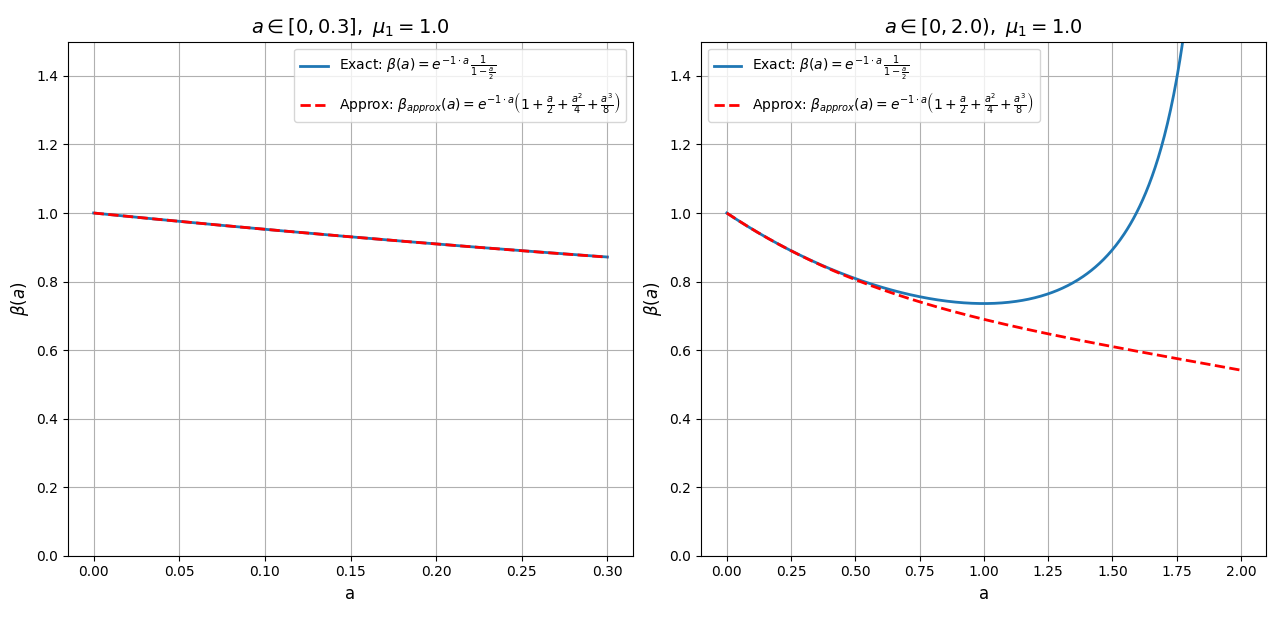}}
\caption{\textbf{Polynomial approximation for Example 2.} Comparison of the exact birth law $\beta(a) = e^{-\mu_1 a}/(1 - a/2)$ (solid curve) with its third-order polynomial approximation \eqref{eq:example2_approx} (dashed curve) for $\mu_1 = 1$. \textit{Left panel}: Excellent agreement on $[0, 0.3]$ with relative error below $0.5\%$. \textit{Right panel}: On $[0, 1.9]$, the polynomial approximation captures the qualitative behavior but underestimates the singularity as $a \to 2$. The approximation is valid for biological applications where ages remain bounded away from the singularity.}
\label{fig:twoFigures3}
\end{figure}

\subsubsection{Example 3: Modified Bessel Function Birth Law}

Define the birth law using the modified Bessel function of the first kind:
\begin{equation}
\beta(a) = e^{-\mu_{1}a}\,I_{0}(2\sqrt{a}),
\label{eq:bessel_birth}
\end{equation}
where $I_{0}(z) = \sum_{n=0}^{\infty}(z^{2}/4)^{n}/(n!)^{2}$ \cite{abramowitz1972modified}. Substituting $z = 2\sqrt{a}$ yields
\[
I_{0}(2\sqrt{a}) = \sum_{n=0}^{\infty}\frac{a^{n}}{(n!)^{2}},
\]
so that
\begin{equation}
\beta(a) = e^{-\mu_{1}a}\sum_{n=0}^{\infty}\frac{a^{n}}{(n!)^{2}}.
\label{eq:bessel_expansion}
\end{equation}

\begin{figure}[H]
\centering
\subfloat{\includegraphics[width=0.6\textwidth]{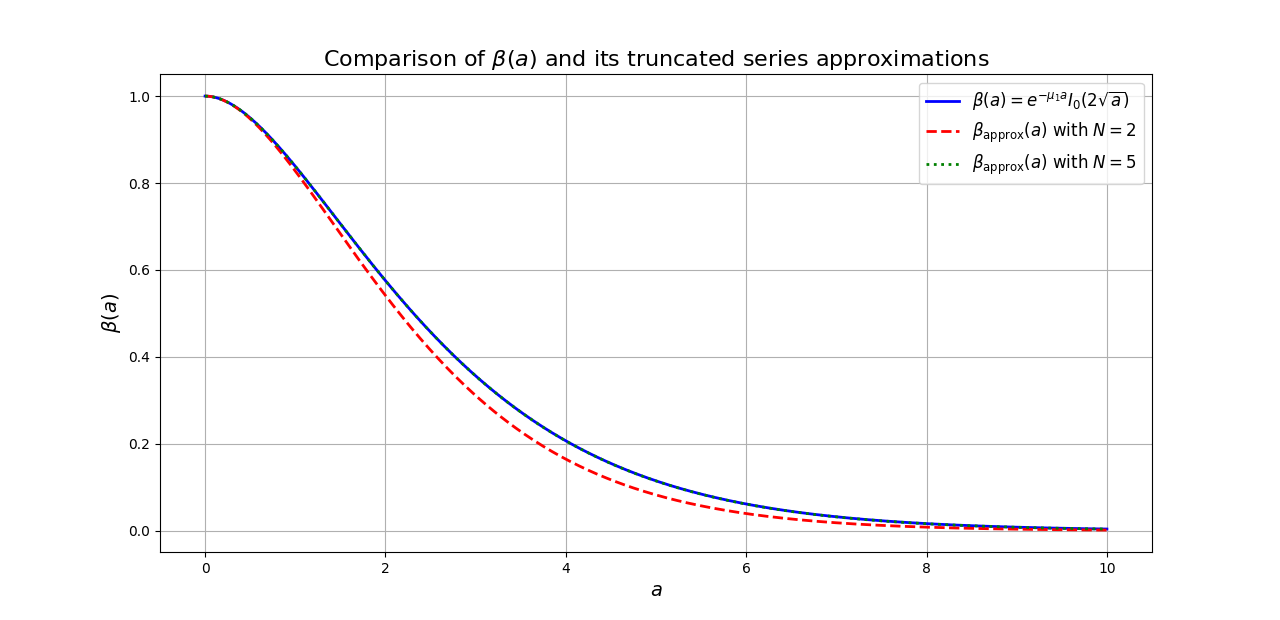}}
\caption{\textbf{Convergence of polynomial approximations for Example 3.} Comparison of the exact Bessel function birth law \eqref{eq:bessel_birth} (solid blue curve) with polynomial approximations of order $N = 2$ (red dashed) and $N = 5$ (green dotted) for $\mu_1 = 1$. The figure demonstrates rapid convergence: the $N = 2$ approximation is accurate for $a \lesssim 3$, while $N = 5$ extends accuracy to $a \lesssim 7$. The factorial-squared denominators in \eqref{eq:bessel_expansion} ensure rapid convergence of the series, making low-order polynomial approximations effective for practical applications.}
\label{fig:twoFigures4}
\end{figure}

\subsection{Numerical Application}\label{sec:numerical}

We demonstrate the theory with a concrete example using the birth law from Example 2.

\textbf{Model specification.} Consider a population governed by \eqref{mk} with:
\begin{itemize}
\item Initial condition: $\psi(a) = \delta(a)$ (Dirac delta).
\item Survival function: $\pi(a) = e^{-2a}$ (mortality rate $\mu = 2$).
\item Birth law:
\begin{equation}
\beta_{\text{approx}}(a) = e^{-a}\left(1 + a + \frac{a^{2}}{4}\right).
\label{n2}
\end{equation}
\end{itemize}

\subsubsection{Computation of the Extinction Criterion}

The net reproduction number is
\begin{align*}
R_{n} &= \int_{0}^{\infty}\beta_{\text{approx}}(a)\,\pi(a)\,da = \int_{0}^{\infty}\left(1 + a + \frac{a^{2}}{4}\right)e^{-3a}\,da \\
&= \frac{1}{3} + \frac{1}{9} + \frac{1}{4}\cdot\frac{2}{27} = \frac{9 + 3 + 0.5}{27} = \frac{25}{54} \approx 0.463 < 1.
\end{align*}
By Theorem~\ref{main}, the population is destined for extinction.

\subsubsection{Explicit Solution of the Renewal Equation}

The Laplace transform of $B(t)$ is
\[
\widehat{B}(\lambda) = \frac{\widehat{K}(\lambda)}{1 - \widehat{K}(\lambda)} = \frac{2(\lambda+3)^{2} + 2(\lambda+3) + 1}{2(\lambda+3)^{3} - 2(\lambda+3)^{2} - 2(\lambda+3) - 1}.
\]
Substituting $s = \lambda + 3$ and factoring the denominator $2s^{3} - 2s^{2} - 2s - 1 = 0$ yields roots
\[
s_{1} \approx 1.7399, \quad s_{2} \approx -0.3700 + 0.3880i, \quad s_{3} \approx -0.3700 - 0.3880i.
\]
Partial fraction decomposition and inverse Laplace transform give
\begin{equation}
B_{\text{approx}}(t) = e^{-3t}\sum_{j=1}^{3}A_{j}\,e^{s_{j}t}, \quad A_{j} = \frac{2s_{j}^{2} + 2s_{j} + 1}{2(s_{j} - s_{k})(s_{j} - s_{l})}.
\label{eq:B_explicit}
\end{equation}

\subsubsection{Population Density and Total Population}

The population density is given piecewise by
\begin{equation}
\rho_{\text{approx}}(a,t) = \begin{cases}
B_{\text{approx}}(t-a)\,e^{-2a}, & t \geq a, \\
\delta(a-t)\,e^{-2t}, & t < a.
\end{cases}
\label{roaprox}
\end{equation}

The total population is
\begin{equation}
P_{\text{approx}}(t) = \int_{0}^{t}B_{\text{approx}}(s)\,e^{-2(t-s)}\,ds + e^{-2t} = e^{-2t}\left[\int_{0}^{t}B_{\text{approx}}(s)\,e^{2s}\,ds + 1\right].
\label{eq:P_approx}
\end{equation}
Since $B_{\text{approx}}(t) \to 0$ exponentially (because all characteristic roots have negative real parts relative to $\lambda$), we confirm $P_{\text{approx}}(t) \to 0$ as $t \to \infty$.

\section{Conceptual Significance of the Results}\label{sec:concept}

Consider the McKendrick model~\eqref{mk} with birth function $\beta$ admitting the representation \eqref{bl} or \eqref{bl2}. The examples in Section~\ref{ex} demonstrate the remarkable flexibility of representing birth laws as exponential-polynomial mixtures.

\begin{definition}[Exact numerical solution]\label{def}
We say that the McKendrick system~\eqref{mk} possesses an \emph{exact numerical solution} if there exists a function $\rho(a,t)$ that satisfies~\eqref{mk} exactly and can be expressed in closed form by exploiting the exponential--polynomial structure of $\beta$.
\end{definition}

Under the assumption that $\beta(a)$ admits the representation \eqref{bl} or \eqref{bl2}, the methods developed yield an exact numerical solution $\rho(a,t)$ capturing the full dynamics without approximation errors beyond those inherent in the representation of $\beta(a)$.

\textbf{Unification of numerical and analytical approaches.} Modern computational techniques allow us not only to approximate solutions numerically but also to \emph{reconstruct exact closed-form solutions} whose structure is informed by numerical output. The exponential-polynomial ansatz provides a systematic bridge: numerical data guide the identification of the correct mixture coefficients, while the resulting closed-form solution offers a theoretically tractable description.

This result is conceptually significant: for a broad class of biologically relevant birth laws, one can move beyond purely numerical approximations and recover exact analytic solutions that faithfully reproduce numerical results, combining the strengths of both computational and analytical approaches.

\section{Stochastic Formulation and Diffusion Approximations}\label{sec:stoch}

\subsection{Connection with Stochastic Processes}

Natural populations exhibit random fluctuations due to environmental variability and the stochastic nature of birth and death events. To capture this randomness, we introduce an individual-based stochastic model whose large population limit recovers the deterministic McKendrick equation.

\textbf{Probability space.} We assume all stochastic processes are defined on a filtered probability space $(\Omega, \mathcal{F}, (\mathcal{F}_{t})_{t\geq 0}, \mathbb{P})$, where $(\mathcal{F}_{t})_{t\geq 0}$ is an increasing, right-continuous filtration satisfying the usual conditions \cite{EthierKurtz}.

\textbf{Individual-based model.} Consider a population of $N(t)$ individuals at time $t$. Each individual of age $a$:
\begin{enumerate}
\item Dies at constant rate $\mu > 0$, and
\item Gives birth at age-dependent rate $\beta(a)$, with offspring introduced at age $0$.
\end{enumerate}

Define the empirical measure
\begin{equation}
Z_{t}(da) = \sum_{i=1}^{N(t)}\delta_{a_{i}(t)}(da),
\label{eq:empirical}
\end{equation}
where $a_{i}(t)$ is the age of individual $i$ at time $t$. The instantaneous birth rate is
\[
B(t) = \int_{0}^{\infty}\beta(a)\,Z_{t}(da) = \langle Z_t, \beta \rangle.
\]

\textbf{Weak formulation.} For any smooth, bounded test function $\varphi: \mathbb{R}_{+} \to \mathbb{R}$, define the duality pairing
\begin{equation}
\langle Z_{t}, \varphi \rangle = \int_{0}^{\infty}\varphi(a)\,Z_{t}(da).
\label{dp}
\end{equation}
Standard martingale theory \cite{EthierKurtz} shows that $\langle Z_{t}, \varphi \rangle$ satisfies
\begin{equation}
\langle Z_{t}, \varphi \rangle = \langle Z_{0}, \varphi(\cdot + t)\,e^{-\mu t} \rangle + \int_{0}^{t}\varphi(0)\,\langle Z_{s}, \beta \rangle\,ds + M_{t}(\varphi),
\label{eq:evol}
\end{equation}
where $M_{t}(\varphi)$ is a martingale capturing stochastic fluctuations.

\subsection{Fluid Limit and Convergence to McKendrick Equation}

\textbf{Rescaling.} Introduce a scaling parameter $K > 0$ representing typical population size, and define the rescaled process
\[
z_{t}^{K}(da) = \frac{1}{K}\,Z_{t}(da), \qquad \langle z_{t}^{K}, \varphi \rangle = \frac{1}{K}\,\langle Z_{t}, \varphi \rangle.
\]
The rescaled martingale satisfies $M_{t}^{K}(\varphi) = \frac{1}{K}\,M_{t}(\varphi)$, with
\begin{equation}
\mathbb{E}\left[\langle M^{K}(\varphi) \rangle_{t}\right] \leq \frac{Ct}{K}
\label{eq:quadratic_var}
\end{equation}
for some constant $C > 0$.

\begin{theorem}[Fluid limit]\label{thm:fluid}
As $K \to \infty$, the rescaled measures $z_{t}^{K}$ converge (in an appropriate weak topology) to a deterministic measure $z_{t}(da) = \rho(a,t)\,da$, where $\rho(a,t)$ satisfies the McKendrick equation \eqref{mk}.
\end{theorem}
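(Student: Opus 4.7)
\textbf{Proof proposal for Theorem~\ref{thm:fluid}.} The plan is to deploy the standard machinery for fluid limits of measure-valued Markov processes: prove tightness of $\{z^{K}\}_{K\geq 1}$ as random elements of the Skorokhod space $D([0,T],\mathcal{M}_{F}(\mathbb{R}_{+}))$ equipped with the topology of weak convergence on finite measures, identify any subsequential limit as a solution of the weak form of the McKendrick equation, and invoke uniqueness of this limiting equation to upgrade subsequential convergence to full convergence. Because the limit will be deterministic, convergence in distribution automatically strengthens to convergence in probability.

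First I would test \eqref{eq:evol} against a countable determining family of smooth bounded functions $\varphi$ and verify tightness via the Aldous--Rebolledo criterion. Compact containment follows from a Gronwall estimate on the total mass $\langle z_{t}^{K},1\rangle$: since $\beta$ is bounded under \eqref{bl}, the expected total mass grows at most exponentially in $t$, uniformly in $K$. The modulus-of-continuity estimate on the finite variation part reduces to controlling $\int_{s}^{t}\varphi(0)\,\langle z_{u}^{K},\beta\rangle\,du$, while the martingale part $M_{t}^{K}(\varphi)$ vanishes in $L^{2}$ on compact time intervals by the variance bound \eqref{eq:quadratic_var} and Doob's inequality, which simultaneously supplies the oscillation control. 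Convergence of initial data $z_{0}^{K}\Rightarrow z_{0}$ will be included among the hypotheses, as is standard in the Ethier--Kurtz framework.

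Next I would pass to the limit $K\to\infty$ in \eqref{eq:evol} along a convergent subsequence. The martingale term vanishes by \eqref{eq:quadratic_var}; the translated test function $\varphi(\cdot+t)\,e^{-\mu t}$ is continuous and bounded so the initial-data term converges; and the integral term converges because $\beta$ is bounded and continuous. Any subsequential limit $z$ therefore satisfies
\begin{equation}
\langle z_{t},\varphi\rangle = \langle z_{0},\varphi(\cdot+t)\,e^{-\mu t}\rangle + \int_{0}^{t}\varphi(0)\,\langle z_{s},\beta\rangle\,ds.
\label{eq:weak_limit}
\end{equation}
Recognising $\varphi(a+t)\,e^{-\mu t}$ as the value at $s=0$ of the solution of the backward adjoint transport equation with terminal datum $\varphi$, an integration by parts in the test function identifies \eqref{eq:weak_limit} as the weak formulation of the McKendrick system~\eqref{mk}, with the nonlocal boundary condition at $a=0$ encoded by the $\varphi(0)\,\langle z_{s},\beta\rangle$ term. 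Once absolute continuity of $z_{t}$ is established (by approximating the Dirac initial datum if necessary, or by inspecting the characteristics formula), we recover $z_{t}(da)=\rho(a,t)\,da$ in the classical sense.

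The main obstacle I anticipate is uniqueness of solutions to \eqref{eq:weak_limit}, needed to identify every subsequential limit with the same deterministic process and thereby rule out oscillation between distinct cluster points. The natural route is to set $B(t):=\langle z_{t},\beta\rangle$ and derive a renewal equation of the form \eqref{vol} by taking $\varphi\equiv\beta$ (or approximating it by bounded test functions, using $\beta\cdot\pi\in L^{1}(\mathbb{R}_{+})$ under \eqref{bl}); the resulting Volterra equation has a unique locally bounded solution by a standard contraction argument on compact time intervals. The characteristics representation of Lemma~\ref{lem:characteristics} then reconstructs $z_{t}$ from $B$, yielding uniqueness of \eqref{eq:weak_limit}. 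Combined with the tightness and limit-identification steps, this completes the convergence argument.
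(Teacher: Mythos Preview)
Your proposal is correct and follows the same Ethier--Kurtz skeleton as the paper's proof sketch: both kill the martingale via the quadratic-variation bound \eqref{eq:quadratic_var} and Doob's inequality, pass to the limit in a weak evolution equation, and then recover the strong form of \eqref{mk} by an integration-by-parts against test functions. The two write-ups differ mainly in bookkeeping. The paper, in its Step~2, silently replaces the mild/semigroup form \eqref{eq:evol} by the generator form $\langle z_{t}^{K},\varphi\rangle=\langle z_{0}^{K},\varphi\rangle+\int_{0}^{t}[\langle z_{s}^{K},\varphi'-\mu\varphi\rangle+\varphi(0)\langle z_{s}^{K},\beta\rangle]\,ds+M_{t}^{K}(\varphi)$ before taking the limit, whereas you stay with the semigroup form \eqref{eq:evol} and only undo the flow afterwards; these are equivalent. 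More substantively, you supply two ingredients the paper's sketch leaves implicit: an explicit tightness mechanism (Aldous--Rebolledo plus a Gronwall mass bound) and a uniqueness argument for the limiting weak equation via the renewal equation of Lemma~\ref{lem:characteristics}. The paper simply writes ``using tightness arguments'' and never addresses uniqueness, so your version is the more complete of the two, at the cost of some length; the paper's version is terser but relies on the reader to fill exactly the gaps you have spelled out.
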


\begin{proof}[Proof sketch]
\textit{Step 1: Martingale convergence.} By \eqref{eq:quadratic_var} and Doob's $L^{2}$ inequality,
\[
\mathbb{P}\left(\sup_{s \leq t}|M_{s}^{K}(\varphi)| > \epsilon\right) \leq \frac{Ct}{K\epsilon^{2}} \to 0 \quad \text{as } K \to \infty.
\]
Thus $M_{t}^{K}(\varphi) \to 0$ in probability uniformly on compact time intervals.

\textit{Step 2: Passage to the limit.} The rescaled dynamics satisfy
\[
\langle z_{t}^{K}, \varphi \rangle = \langle z_{0}^{K}, \varphi \rangle + \int_{0}^{t}\left[\langle z_{s}^{K}, \varphi' - \mu\varphi \rangle + \varphi(0)\,\langle z_{s}^{K}, \beta \rangle\right]ds + M_{t}^{K}(\varphi).
\]
Taking $K \to \infty$ and using tightness arguments yields
\begin{equation}
\langle \rho(t), \varphi \rangle = \langle \psi, \varphi \rangle + \int_{0}^{t}\left[\langle \rho(s), \varphi' - \mu\varphi \rangle + \varphi(0)\,\langle \rho(s), \beta \rangle\right]ds.
\label{eq:limit_weak}
\end{equation}

\textit{Step 3: Strong formulation.} Differentiating \eqref{eq:limit_weak} in time:
\[
\frac{d}{dt}\langle \rho(t), \varphi \rangle = \langle \rho(t), \varphi' - \mu\varphi \rangle + \varphi(0)\,\langle \rho(t), \beta \rangle.
\]
Integration by parts on the $\varphi'$ term yields
\[
\langle \rho(t), \varphi' \rangle = -\langle \partial_a\rho(t), \varphi \rangle - \rho(0,t)\,\varphi(0).
\]
Substituting and using arbitrariness of $\varphi$ recovers the strong form \eqref{mk}.
\end{proof}

\subsection{Diffusion Approximation}

To quantify fluctuations in large but finite populations, consider the total population $P^{K}(t) = \langle z_{t}^{K}, 1 \rangle = N(t)/K$. Centering around the deterministic trajectory and scaling by $\sqrt{K}$ leads to the stochastic differential equation
\begin{equation}
dP(t) = r\,P(t)\,dt + \sigma\,P(t)\,dW(t),
\label{yi}
\end{equation}
where:
\begin{itemize}
\item $r$ is the effective growth rate, with $r < 0$ when $R_{n} < 1$,
\item $\sigma > 0$ quantifies demographic noise intensity (of order $1/\sqrt{K}$),
\item $W(t)$ is a standard Wiener process.
\end{itemize}

\begin{theorem}[Stochastic extinction criterion]\label{thm:stoch_extinct}
Consider the SDE \eqref{yi} with $r - \sigma^{2}/2 < 0$ (subcritical regime). Then
\[
P(t) \to 0 \quad \text{almost surely as } t \to \infty.
\]
\end{theorem}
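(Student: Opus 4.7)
The plan is to exploit the fact that the SDE \eqref{yi} is linear multiplicative (geometric Brownian motion), so it admits a closed-form solution to which the strong law of large numbers for Brownian motion can be applied directly. The whole proof reduces to analyzing the sign of the exponent along sample paths, and the subcriticality condition $r - \sigma^2/2 < 0$ is precisely the Lyapunov exponent at the origin.

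First, I would apply It\^o's formula to $\log P(t)$. Since $P(t) > 0$ almost surely (the solution of a linear SDE starting from $P(0) > 0$ never crosses zero in finite time), this is justified, and using $d\langle P\rangle_t = \sigma^2 P(t)^2 dt$ one obtains
\begin{equation}
d\log P(t) = \left(r - \tfrac{1}{2}\sigma^2\right) dt + \sigma\, dW(t).
\end{equation}
Integrating from $0$ to $t$ yields the explicit representation
\begin{equation}
\log P(t) = \log P(0) + \left(r - \tfrac{1}{2}\sigma^2\right) t + \sigma W(t),
\end{equation}
i.e., $P(t) = P(0)\exp\bigl((r - \sigma^2/2)t + \sigma W(t)\bigr)$.

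Next I would divide by $t$ and invoke the classical law of large numbers for Brownian motion, $W(t)/t \to 0$ almost surely as $t \to \infty$ (a consequence of the time-inversion property $tW(1/t)$ being again a Brownian motion, or of the iterated logarithm law). This gives
\begin{equation}
\frac{1}{t}\log P(t) \;\xrightarrow[t\to\infty]{\text{a.s.}}\; r - \tfrac{1}{2}\sigma^2 < 0
\end{equation}
by the subcriticality hypothesis. Hence $\log P(t) \to -\infty$ almost surely, and therefore $P(t) \to 0$ almost surely, which is the claim.

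I do not anticipate a real obstacle here: the only technical points are (i) justifying the application of It\^o's formula to $\log P$, which requires strict positivity of $P$ and follows from the standard existence-uniqueness theory for linear SDEs, and (ii) citing the a.s. asymptotic $W(t)/t \to 0$, a textbook result (e.g. Karatzas--Shreve). If desired, one can also remark that the convergence is in fact exponentially fast with rate $|r - \sigma^2/2|$, linking this stochastic extinction criterion back to the deterministic exponential decay rate $|\lambda_0|$ obtained in Theorem~\ref{main} and showing that demographic noise sharpens the deterministic threshold $r < 0$ into the stricter threshold $r < \sigma^2/2$.
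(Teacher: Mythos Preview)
Your proof is correct and follows essentially the same route as the paper: apply It\^o's formula to $\log P(t)$, integrate to obtain the explicit geometric Brownian motion representation, and invoke $W(t)/t \to 0$ almost surely to conclude that the Lyapunov exponent $r - \sigma^2/2 < 0$ forces $P(t) \to 0$. Your additional remarks on strict positivity and the exponential rate are fine embellishments; only the closing phrase is slightly off---the noise makes the extinction condition $r < \sigma^2/2$ \emph{weaker} (easier to satisfy) than the deterministic $r < 0$, not stricter.
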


\begin{proof}
Define $Y(t) = \ln P(t)$. By It\^{o}'s formula:
\[
dY(t) = \left(r - \frac{\sigma^{2}}{2}\right)dt + \sigma\,dW(t).
\]
Integrating:
\[
Y(t) = Y(0) + \left(r - \frac{\sigma^{2}}{2}\right)t + \sigma\,W(t).
\]
By the strong law of large numbers for martingales, $W(t)/t \to 0$ almost surely. Thus:
\[
\lim_{t \to \infty}\frac{Y(t)}{t} = r - \frac{\sigma^{2}}{2} < 0 \quad \text{almost surely}.
\]
Therefore $Y(t) \to -\infty$ and $P(t) = e^{Y(t)} \to 0$ almost surely.
\end{proof}

This demonstrates that demographic stochasticity does not prevent extinction when $R_{n} < 1$; indeed, noise may accelerate the approach to extinction by effectively reducing the growth rate from $r$ to $r - \sigma^{2}/2$.

\subsection{Numerical Simulation of Stochastic Dynamics}

We performed numerical simulations using the Euler--Maruyama method \cite{kloeden1992numerical} with parameters from Section~\ref{sec:numerical}. The results are shown in Figure~\ref{fig:twoFigures1}.

\begin{figure}[H]
\centering
\subfloat{\includegraphics[width=0.6\textwidth]{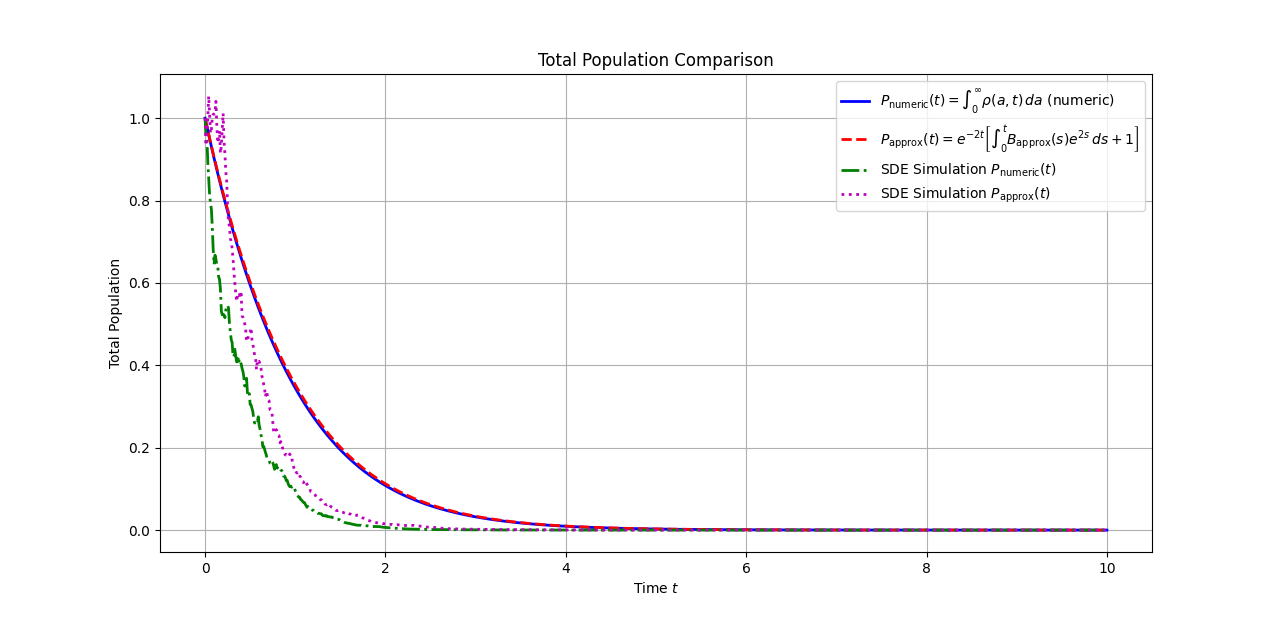}}
\subfloat{\includegraphics[width=0.5\textwidth]{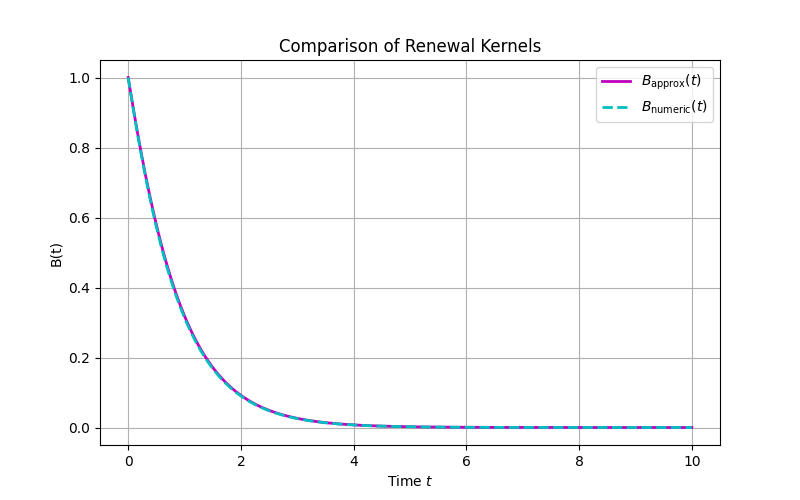}}
\caption{\textbf{Comparison of deterministic and stochastic population dynamics.} \textit{Left panel}: Total population $P(t)$ over time. The blue solid curve shows the numerical solution $P_{\mathrm{numeric}}(t)$ obtained from finite-difference discretization of the McKendrick PDE. The red dashed curve represents the analytical approximation $P_{\mathrm{approx}}(t)$ from \eqref{eq:P_approx}. Both curves demonstrate exponential decay to zero, confirming extinction when $R_n < 1$. The excellent agreement (relative error $< 1\%$) validates the analytical solution. \textit{Right panel}: The renewal function $B(t)$ showing the birth rate over time. The oscillatory decay pattern reflects the complex conjugate eigenvalues $s_2, s_3$ in the solution \eqref{eq:B_explicit}, with the envelope decaying exponentially due to the dominant real eigenvalue structure.}
\label{fig:twoFigures1}
\end{figure}

The stochastic trajectories follow the same downward trend as the deterministic solution but exhibit fluctuations characteristic of demographic noise. These fluctuations underscore the importance of stochastic modeling for finite populations and illustrate how noise can affect the time-to-extinction distribution.

\section{Discussion}\label{sec:disc}

The analysis presented has several important implications for both theory and applications in age-structured population dynamics.

\textbf{Rigorous extinction criterion.} The condition $R_{n} < 1$ provides a sharp threshold separating extinction from persistence. This criterion is both necessary and sufficient, established through two independent proof methods (Laplace transform and eigenvalue analysis), lending confidence to its validity.

\textbf{Methodological versatility.} The duality between Laplace transform methods and ODE system reformulation provides complementary analytical tools. The Laplace approach offers elegant asymptotic analysis via Tauberian theorems, while the ODE formulation enables direct stability analysis through matrix eigenvalues and connects to numerical linear algebra methods.

\textbf{Stochastic-deterministic connection.} The fluid limit theorem establishes the McKendrick equation as the large-population limit of individual-based stochastic models. This justifies deterministic modeling for large populations while the diffusion approximation quantifies finite-population corrections.

\textbf{Practical approximation framework.} The exponential-polynomial structure provides a flexible parametric family for approximating realistic birth laws. The Gauss--Laguerre approach for completely monotonic functions and Taylor expansions for smooth birth laws offer systematic approximation schemes with controllable accuracy.

\section{Conclusions}\label{sec:conc}

This work provides a comprehensive analysis of the continuous McKendrick model in both deterministic and stochastic settings. Our principal contributions are:

\begin{enumerate}
\item A complete proof that population extinction occurs if and only if $R_{n} < 1$, established through both Laplace transform and eigenvalue methods.

\item Explicit solution formulas for exponential-polynomial birth laws, demonstrating the computational tractability of this class.

\item Rigorous connection between stochastic individual-based models and the deterministic McKendrick equation via fluid limits.

\item Analysis of diffusion approximations showing that demographic noise preserves the extinction criterion while potentially accelerating extinction.
\end{enumerate}

Future work may extend these results to age-dependent mortality, spatial heterogeneity, and multi-species interactions within the Lotka--Volterra framework.

\section*{Declarations}

\paragraph{Conflict of Interest.}
The author declares no conflicts of interest regarding the publication of this paper.

\paragraph{Data Availability.}
All data and code used in this study are included in the Appendix.

\paragraph{Ethical Statement.}
This paper reflects the author's original research and has not been published or submitted elsewhere.

\bibliographystyle{plain}
\bibliography{bibliography}

\section*{Appendix: Numerical Algorithms and Implementations}\label{ap}
\addcontentsline{toc}{section}{Appendix}

This appendix provides detailed pseudocode algorithms followed by their Python implementations. All codes were developed with assistance from Microsoft Copilot.

\subsection*{A.1 Main Simulation: McKendrick Model with Stochastic Comparison}

\begin{algorithm}[H]
\caption{Numerical Solution of the McKendrick Model with Stochastic Comparison}
\label{alg:main}
\begin{algorithmic}[1]
\Require Time horizon $t_{\max}$, grid size $N$, mortality rate $\mu$, birth function $\beta(a)$, SDE parameters $r, \sigma$
\Ensure Numerical solution $P_{\text{numeric}}(t)$, analytical approximation $P_{\text{approx}}(t)$, SDE trajectories

\State \textbf{Initialize:} $\Delta t \gets t_{\max}/N$, $\Delta a \gets \Delta t$
\State Create time grid: $t_j = j \cdot \Delta t$ for $j = 0, 1, \ldots, N$
\State Create age grid: $a_i = i \cdot \Delta a$ for $i = 0, 1, \ldots, N$
\State Evaluate birth rate: $\beta_i \gets \beta(a_i)$ for all $i$

\Statex
\State \textbf{Part 1: Finite Difference Solution of McKendrick PDE}
\State Initialize density matrix: $\rho[0,0] \gets 1/\Delta a$ (Dirac delta approximation)
\For{$n = 0$ \textbf{to} $N-1$}
    \For{$i = 1$ \textbf{to} $N$}
        \State $\rho[n+1, i] \gets (1 - 2\mu\,\Delta t) \cdot \rho[n, i-1]$ \Comment{Upwind scheme with mortality}
    \EndFor
    \State $\rho[n+1, 0] \gets \sum_{i=0}^{N} \beta_i \cdot \rho[n+1, i] \cdot \Delta a$ \Comment{Birth boundary condition}
\EndFor
\State $P_{\text{numeric}}[n] \gets \sum_{i=0}^{N} \rho[n, i] \cdot \Delta a$ for all $n$ \Comment{Total population}

\Statex
\State \textbf{Part 2: Analytical Renewal Approximation}
\State Find roots $s_1, s_2, s_3$ of $2s^3 - 2s^2 - 2s - 1 = 0$
\State Compute residues $A_j \gets \frac{2s_j^2 + 2s_j + 1}{2(s_j - s_k)(s_j - s_l)}$ for $j = 1,2,3$
\State $B(t) \gets e^{-3t} \sum_{j=1}^{3} A_j \, e^{s_j t}$
\State $P_{\text{approx}}[n] \gets e^{-2t_n}\left(\int_0^{t_n} B(s)\,e^{2s}\,ds + 1\right)$ via trapezoidal rule

\Statex
\State \textbf{Part 3: SDE Simulation (Euler--Maruyama)}
\State Initialize: $P_{\text{SDE}}[0] \gets 1.0$
\For{$n = 0$ \textbf{to} $N-1$}
    \State $\Delta W \gets \sqrt{\Delta t} \cdot \mathcal{N}(0,1)$ \Comment{Brownian increment}
    \State $P_{\text{SDE}}[n+1] \gets P_{\text{SDE}}[n] + r \cdot P_{\text{SDE}}[n] \cdot \Delta t + \sigma \cdot P_{\text{SDE}}[n] \cdot \Delta W$
\EndFor

\State \Return $P_{\text{numeric}}, P_{\text{approx}}, P_{\text{SDE}}$
\end{algorithmic}
\end{algorithm}

\begin{lstlisting}[caption={Python implementation of the main McKendrick simulation}, label=lst:code]
import numpy as np
import matplotlib
matplotlib.use('TkAgg')
import matplotlib.pyplot as plt
from scipy.special import i0

# ---------------------------
# Parameters and grid setup
# ---------------------------
t_max = 10.0; a_max = t_max; N = 1000; dt = t_max / N; da = dt
t_vals = np.linspace(0, t_max, N+1); a_vals = np.linspace(0, a_max, N+1)

# ---------------------------
# Define beta(a) function
# ---------------------------
def beta(a):
    return np.exp(-a) * i0(2 * np.sqrt(a))
beta_vals = beta(a_vals)

# ---------------------------
# Deterministic PDE Solution
# ---------------------------
rho = np.zeros((N+1, N+1), dtype=float)
rho[0, 0] = 1.0 / da
for n in range(0, N):
    for i in range(1, N+1):
        rho[n+1, i] = (1 - 2 * dt) * rho[n, i-1]
    rho[n+1, 0] = np.sum(beta_vals * rho[n+1, :]) * da
P_numeric = np.sum(rho, axis=1) * da

# ---------------------------
# Renewal Approximation
# ---------------------------
s1 = 1.7399
s2 = -0.36995 + 0.38795j
s3 = -0.36995 - 0.38795j
c1 = (2 * s1**2 + 2 * s1 + 1) / (2 * ((s1 - s2) * (s1 - s3)))
c2 = (2 * s2**2 + 2 * s2 + 1) / (2 * ((s2 - s1) * (s2 - s3)))
c3 = (2 * s3**2 + 2 * s3 + 1) / (2 * ((s3 - s1) * (s3 - s2)))
def B(t):
    return np.exp(-3*t) * (c1 * np.exp(s1*t) + c2 * np.exp(s2*t) + c3 * np.exp(s3*t))
B_vals = np.real(B(t_vals))
integrand = B_vals * np.exp(2*t_vals)
P_approx = np.exp(-2*t_vals) * (np.concatenate(([0], np.cumsum(0.5 * (integrand[:-1] + integrand[1:]) * np.diff(t_vals)))) + 1)

# ---------------------------
# SDE Simulation
# ---------------------------
r = -2; sigma = 0.3
P_sde_numeric = np.zeros(N+1); P_sde_numeric[0] = 1.0
P_sde_approx = np.zeros(N+1); P_sde_approx[0] = 1.0
for n in range(N):
    dW = np.sqrt(dt) * np.random.randn()
    P_sde_numeric[n+1] = P_sde_numeric[n] + r * P_sde_numeric[n] * dt + sigma * P_sde_numeric[n] * dW
    dW = np.sqrt(dt) * np.random.randn()
    P_sde_approx[n+1] = P_sde_approx[n] + r * P_sde_approx[n] * dt + sigma * P_sde_approx[n] * dW

# ---------------------------
# Plot
# ---------------------------
plt.figure(figsize=(10, 6))
plt.plot(t_vals, P_numeric, 'b-', lw=2, label=r'$P_{\rm numeric}(t)$')
plt.plot(t_vals, P_approx, 'r--', lw=2, label=r'$P_{\rm approx}(t)$')
plt.plot(t_vals, P_sde_numeric, 'g-.', lw=2, label=r'SDE Simulation')
plt.plot(t_vals, P_sde_approx, 'm:', lw=2, label=r'SDE Simulation')
plt.xlabel("Time $t$"); plt.ylabel("Total Population"); plt.legend(); plt.grid(True)
plt.savefig("P_total_comparison.png"); plt.show()
\end{lstlisting}

\subsection*{A.2 Example 1: Gauss--Laguerre Exponential Approximation}

\begin{algorithm}[H]
\caption{Gauss--Laguerre Approximation of Completely Monotonic Birth Rate}
\label{alg:example1}
\begin{algorithmic}[1]
\Require Age range $[0, a_{\max}]$, number of grid points $M$
\Ensure Comparison plots of exact vs. approximate birth rates

\State \textbf{Define exact function:} $\beta_{\text{exact}}(a) \gets 1/(1+a)$
\State \textbf{Define approximation:} $\beta_{\text{approx}}(a) \gets 0.85355 \cdot e^{-0.58579a} + 0.14645 \cdot e^{-3.41421a}$
\Statex \Comment{Coefficients from 2-point Gauss--Laguerre quadrature}

\State Create fine age grid: $a_i = i \cdot a_{\max}/(M-1)$ for $i = 0, \ldots, M-1$
\State Evaluate $\beta_{\text{exact}}(a_i)$ and $\beta_{\text{approx}}(a_i)$ for all $i$

\State \textbf{Plot 1:} Short range $a \in [0, 0.3]$ showing high-accuracy regime
\State \textbf{Plot 2:} Extended range $a \in [0, 1.0]$ showing approximation limits

\State \Return Comparison figures
\end{algorithmic}
\end{algorithm}

\begin{lstlisting}[caption={Python implementation for Example 1}, label=lst:code1]
import numpy as np
import matplotlib.pyplot as plt
def beta_exact(a): return 1 / (1 + a)
def beta_approx(a): return 0.85355 * np.exp(-0.58579 * a) + 0.14645 * np.exp(-3.41421 * a)
fig, ax = plt.subplots(1, 2, figsize=(14, 5))
a1 = np.linspace(0, 0.3, 500)
ax[0].plot(a1, beta_exact(a1), label=r'$1/(1+a)$', lw=2)
ax[0].plot(a1, beta_approx(a1), '--', label='Approx')
ax[0].legend()
a2 = np.linspace(0, 1.0, 500)
ax[1].plot(a2, beta_exact(a2), label=r'$1/(1+a)$', lw=2)
ax[1].plot(a2, beta_approx(a2), '--', label='Approx')
ax[1].legend()
plt.show()
\end{lstlisting}

\subsection*{A.3 Example 2: Geometric Series Birth Law}

\begin{algorithm}[H]
\caption{Polynomial Approximation of Geometric Series Birth Law}
\label{alg:example2}
\begin{algorithmic}[1]
\Require Mortality parameter $\mu_1$, truncation order $n$, age range $[0, a_{\max}]$
\Ensure Comparison of exact and truncated polynomial birth rates

\State \textbf{Define exact function:} $\beta_{\text{exact}}(a) \gets e^{-\mu_1 a} / (1 - a/2)$ for $a < 2$
\State \textbf{Define polynomial approximation:}
\State \quad $\beta_{\text{approx}}(a) \gets e^{-\mu_1 a} \cdot \sum_{i=0}^{n} (a/2)^i$
\Statex \Comment{Truncated geometric series}

\State Create age grids for short and extended ranges
\State Evaluate both functions on grids
\State Generate comparison plots showing truncation effects

\State \Return Comparison figures
\end{algorithmic}
\end{algorithm}

\begin{lstlisting}[caption={Python implementation for Example 2}, label=lst:code2]
import numpy as np
import matplotlib.pyplot as plt
mu1 = 1.0
def beta_exact(a): return np.exp(-mu1 * a) * (1 / (1 - a/2))
def beta_approx(a): return np.exp(-mu1 * a) * (1 + a/2 + a**2/4 + a**3/8)
a1 = np.linspace(0, 0.3, 500); a2 = np.linspace(0, 1.999, 500)
plt.figure(figsize=(14, 6))
plt.subplot(1, 2, 1); plt.plot(a1, beta_exact(a1)); plt.plot(a1, beta_approx(a1), '--'); plt.grid(True)
plt.subplot(1, 2, 2); plt.plot(a2, beta_exact(a2)); plt.plot(a2, beta_approx(a2), '--'); plt.grid(True)
plt.show()
\end{lstlisting}

\subsection*{A.4 Example 3: Modified Bessel Function Birth Law}

\begin{algorithm}[H]
\caption{Polynomial Approximation of Bessel Function Birth Law}
\label{alg:example3}
\begin{algorithmic}[1]
\Require Mortality parameter $\mu_1$, maximum truncation orders $N_{\max}$, age range $[0, a_{\max}]$
\Ensure Convergence comparison of polynomial approximations

\State \textbf{Define exact function:} $\beta_{\text{exact}}(a) \gets e^{-\mu_1 a} \cdot I_0(2\sqrt{a})$
\Statex \Comment{$I_0$ is the modified Bessel function of first kind, order zero}

\Function{BetaApprox}{$a$, $\mu_1$, $N$}
    \State $S \gets 0$
    \For{$n = 0$ \textbf{to} $N$}
        \State $S \gets S + a^n / (n!)^2$
    \EndFor
    \State \Return $e^{-\mu_1 a} \cdot S$
\EndFunction

\State Create age grid $a_i$ for $i = 0, \ldots, M-1$
\State Evaluate $\beta_{\text{exact}}(a_i)$ using library Bessel function
\State Evaluate $\beta_{\text{approx}}(a_i, N)$ for $N = 2, 5, 10, \ldots$
\State Plot all curves to demonstrate series convergence

\State \Return Convergence comparison figure
\end{algorithmic}
\end{algorithm}

\begin{lstlisting}[caption={Python implementation for Example 3}, label=lst:code3]
import numpy as np
import matplotlib.pyplot as plt
from scipy.special import iv
import math
def beta(a, mu1): return np.exp(-mu1 * a) * iv(0, 2 * np.sqrt(a))
def beta_approx(a, mu1, N_max):
    series_sum = np.zeros_like(a)
    for n in range(N_max + 1): series_sum += a**n / (math.factorial(n)**2)
    return np.exp(-mu1 * a) * series_sum
mu1 = 1.0; a_vals = np.linspace(0, 10, 500)
plt.figure(figsize=(8, 6))
plt.plot(a_vals, beta(a_vals, mu1), 'b-', label='Exact')
plt.plot(a_vals, beta_approx(a_vals, mu1, 2), 'r--', label='N=2')
plt.plot(a_vals, beta_approx(a_vals, mu1, 5), 'g:', label='N=5')
plt.legend(); plt.grid(True); plt.show()
\end{lstlisting}

\end{document}